\begin{document}
\title{Bounded Counter Languages%
\thanks{Research partially supported by ``Deutsche Akademie
der Naturforscher Leopoldina'', grant number BMBF-LPD~9901/8-1
of ``Bundesministerium f{\" u}r Bildung und For\-schung''.} 
}
\author{Holger Petersen\\ 
           Reinsburgstr.\ 75\\ D-70197 Stuttgart}
\institute{}
\maketitle
\begin{abstract}
We show that deterministic finite automata equipped with $k$ two-way heads are 
equivalent to deterministic machines  with a single two-way input head and 
$k-1$ linearly  bounded counters if the accepted language is strictly bounded, i.e., 
a subset of  $a_1^*a_2^*\cdots a_m^*$ for a fixed sequence of symbols $a_1, a_2,\ldots,
a_m$. Then we investigate linear speed-up for counter machines. Lower and upper time
bounds for concrete recognition problems are shown, implying that in general linear speed-up 
does not hold for counter machines. For bounded languages we develop a technique for 
speeding up computations by any constant factor at the expense of adding a fixed number
of counters.
\end{abstract}

\section{Introduction}
The computational model investigated in
the present work is the two-way counter machine as defined in \cite{FMR68}. 
Recently, the power of this model
has been compared to quantum automata and probabilistic automata \cite{YKI05,SY11}. 

We will show that bounded counters and heads are 
equally powerful for finite deterministic devices, provided the languages 
under consideration are strictly bounded.
By equally powerful we mean that up to a single head each two-way
input head of a deterministic finite machine can be simulated 
by a counter bounded by the input length and vice versa. 
The condition that the input is bounded cannot be removed in general, since it is known
that deterministic finite two-way two-head automata are more powerful
than deterministic two-way one counter machines 
if the input is not bounded \cite{Duris82}.
The special case of equivalence between deterministic
one counter machines and two-head automata over a single letter
alphabet has been shown with the help of a two-dimensional
automata model as an intermediate step in \cite{Morita77jap}, see also 
\cite{Morita77}.

Adding  resources to a computational model should intuitively increase its power. 
This is true in the case of time and space 
hierarchies for Turing machines, see the text book \cite[Chapter~9]{Sipser06}.  
The language from \cite{Duris82} cited above is easily acceptable with two counters and thus shows that two counters are 
more powerful than one. A further growing 
number of unbounded counters does however not increase the power of these machines due to the classical result of Minsky \cite{Minsky61},
showing that  machines with two counters are universal. 
Thus the formally defined hierarchy of language classes accepted by  machines
with a growing number of counters collapses to the second level.
A tight hierarchy is obtained if the counters are linearly bounded \cite{Monien80} or if the machines are working in real-time 
\cite[Theorem~1.3]{FMR68}. In the latter case the machines are allowed to make one step per input symbol  and there is obviously 
no difference in accepting power between one-way and two-way access to the input. If we restrict the input to be read one-way
(sometimes called on-line \cite{Greibach76}), a hierarchy in exponential time can be shown \cite{Petersen09}.

The starting point of our investigation of time hierarchies is Theorem~1.3 of \cite{FMR68}, where
the authors show that the language of marked binary palindromes has time complexity  $\Theta(n^2/\log n)$ on two-way 
counter machines.
By techniques from descriptional complexity \cite{LV93} for the lower bounds  
we are able to separate classes of machines with different numbers of counters.
A main motivation for this work is of course a fundamental interest in
the way the capabilities of a computational device influence its
power. There are however other more technical consequences.  
 Special cases of strictly bounded languages are languages over a single letter
alphabet. Our simulation of heads by counters thus eliminates the need for hierarchy
results separating $k+1$ bounded counters from $k$ bounded counters for deterministic
devices with the help of single letter alphabet languages; the
hierarchy for deterministic devices stated in Theorem~3 of
\cite{Monien80} follows from the corresponding result for multi-head
automata in Theorem~1 of \cite{Monien80}. 

In comparison with
multi-head automata, counter machines appear to be affected
by slight technical changes of the
definition. There are, e.g., several natural ways to define counter
machines with counters bounded by the input length \cite{Ritchie72}.
One could simply require that the counters never overflow, with the
drawback that this property is undecidable in general.  Alternatively the
machine could block in the case of overflow, acceptance then being
based on the current state, a specific error condition could be
signaled to the machine, with the counter being void, or the counter
could simply remain unchanged.  The latter model is clearly equivalent
to the known concept of a {\em simple} multi-head automaton. All these
variants, which seem to be slightly different in power, can easily be
simulated with the help of heads and therefore coincide, at least for strictly
bounded input.

Regarding time bounded computations, we present an  algorithm for the recognition of 
marked palindromes working with only two counters, while the upper bound outlined in the proof 
of \cite[Theorem~1.3]{FMR68} requires at least three counters (one for storing $\log_2 m$ and two for
encoding portions of the input).
We show that counter machines lack general linear speed-up. Other models of computation with this property include Turing machines with tree storages
\cite{Huehne93} and Turing machines with a fixed
alphabet and a fixed number of tapes \cite{BCS11}. By adapting the witness language, we disprove a claim that these machines satisfy speed-up for
 polynomial time bounds \cite{Greibach76}.
Finally we  present a class of languages where linear speed-up can be achieved by adding a fixed number of counters.

\section{Definitions}

A language is {\em bounded} if it is a subset of 
$w_1^*w_2^*\cdots w_m^*$ for a fixed sequence of
words $w_1, w_2\ldots,  w_m$ (which are not necessarily distinct).
We call $w_1^*w_2^*\cdots w_m^*$ the {\em bound} of the language.
A language is {\em strictly bounded} if it is a subset of 
$a_1^*a_2^*\cdots a_m^*$ for distinct symbols $a_1, a_2, \ldots, a_m$.
A maximal sequence of symbols
$a_i$ in the input will be called a {\em block}.

Formal definitions of variants of counter machines can be found in \cite{FMR68}. 
We only point out a few essential features
of  the models. 
The main models of computation investigated here are the {\em $k$-head
  automaton}, the {\em bounded counter machine with $k$ counters},
and the {\em register machine with $k$ bounded registers}, cf.\ 
\cite{Monien80}.  The two former types of automata have read-only input tapes
bordered by end-markers. Therefore on an
input of length $n$ there are $n+2$ different positions that can be read.
The $k$-head automaton is equipped with $k$
two-way heads that may move independently on the input tape and
transmit the symbols read to the finite control.  Note that informally
two heads of the basic model
cannot see each other, i.e., the machine has no way of finding out that
they happen to be reading the same input square. Heads  with the capability 
``see'' each others are called sensing.

The bounded counter machine is equipped with a single two-way head
and $k$ counters that can count up to the input length. The operations
it can perform on the counters are increment, decrement, and zero-test.
The machines start their operation with all heads next to the left
end-marker  (on the first input symbol if the input is not empty) and 
all counters set to zero.  Acceptance is by final state and can occur with
the input head at any position. 

The register machine receives an input number in its first register,
all other registers are initially zero. We compare register machines 
with the other types of machines by identifying nonnegative integers
and strings over a single letter alphabet of a corresponding length.
All machines have deterministic and nondeterministic variants and accept
by entering a final state.

The set of {\em marked palindromes} over a binary alphabet is
$L = \{ x\$ x^T \mid x\in\{ 0, 1\}^*\},$
where $x^T$ denotes the reversal of $x$. 

With the restriction that at least approximately half of the input is filled by zeros, we obtain
$$L' = \{ x0^{|x|}\$ 0^{|x|}x^T \mid x\in\{ 0, 1\}^*\},$$
and a further restriction to only a logarithmic information content leads to the family
$$L_m = \{ x0^{2^{|x|/m}-|x|}\$ 0^{2^{|x|/m}-|x|}x^T \mid x\in\{ 0, 1\}^*\}$$
for $m\ge 1$.

\section{Equivalence of Heads and Counters for Deterministic Machines}\label{secequivalence}

In order to simplify the presentation we assume below
without loss of generality that a multi-head
automaton moves exactly one head in every step.  Suppose a multi-head
automaton operates on a word from a strictly bounded language. We call
a step in which a head passes from one block of identical symbols to a
neighboring block or an end-marker an {\em event}.  The head moved in this step is said
to {\em cause the event}.

\begin{lemma}\label{segment}
  Let the input of a deterministic multi-head automaton be strictly bounded.
  It is possible to determine whether a head will cause the next event
  (under the assumption that no other head does) by inspecting an
  input segment of fixed length around
  this head independently of the input size. If it can cause the next event it 
  will also be determined 
  at which boundary of the scanned block it will happen.
\end{lemma}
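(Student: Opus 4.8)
The plan is to exploit the fact that, as long as no head crosses a block boundary, every head reads a constant symbol, so the transition function collapses to a map on states alone. First I would fix the head $h$ in question together with the symbols currently scanned by the other heads; these are frozen by the assumption that no other head causes an event, since the remaining heads stay inside their blocks and hence keep reading the same symbol even if they move. With that symbol vector fixed, the transition function of the automaton becomes a function $g\colon Q\to Q$ on the finite state set $Q$, because the state component of the transition no longer depends on any varying input symbol. Consequently the sequence of control states obtained by iterating $g$ is eventually periodic, with preperiod and period each bounded by $|Q|$. This bound is the whole point of the argument, since it makes the relevant time horizon independent of the input length.

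Next I would simulate this state iteration for $|Q|$ preperiodic steps followed by one full period and track only the net motion of $h$. Since at most one head moves per step, $h$ advances by at most one cell per step and therefore stays within distance $2|Q|$ of its starting position throughout the simulated stretch; hence a window of fixed length $O(|Q|)$ centred at $h$ suffices both to carry out the simulation and to see whichever boundary of the block around $h$ is relevant. If during this stretch $h$ actually reaches a boundary of its block, then $h$ causes the next event there, and the window reveals which boundary it is. Otherwise $h$ remains inside its block throughout, and I would read off the net displacement $\Delta_h$ accumulated over one period: if $\Delta_h=0$ the pair consisting of the state and the position of $h$ repeats verbatim, so the configuration cycles forever and $h$ never causes an event; if $\Delta_h>0$ (respectively $\Delta_h<0$) the head drifts monotonically on average, and since the block is finite it must eventually reach the right (respectively left) boundary, which is therefore the boundary at which it causes an event.

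The main obstacle I anticipate is making the drift argument airtight in the case $\Delta_h\neq 0$: one must rule out that the bounded oscillation within a single period carries $h$ across the \emph{opposite} boundary before the accumulated drift reaches the expected one. This is exactly why the simulation covers a full preperiod plus one entire period before the sign of $\Delta_h$ is consulted---any crossing that could occur within the oscillation range has already happened inside the simulated window and been detected directly, so once $h$ has survived one whole period the only remaining crossings are governed by the sign of $\Delta_h$. The remaining bookkeeping---verifying that $O(|Q|)$ cells genuinely bound the excursion, and that the other heads' constant symbols are the only information about them that the determination requires---is routine.
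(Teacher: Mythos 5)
Your proposal is correct and follows essentially the same route as the paper: freeze the symbols seen by all heads, observe that the transition collapses to a map on states so the state sequence becomes periodic within $|Q|$ steps, and conclude that a window of size $O(|Q|)$ around the head suffices to decide whether and where it exits its block. If anything, you are more explicit than the paper about why the sign of the per-period drift determines the boundary (the paper merely asserts that the boundary ``is clear'' when the head leaves the segment), but the underlying argument is identical.
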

\begin{proof}
 Let the automaton have $r$ internal states.
If the head moves to a square at least $r$ positions away from
its initial position within the same block before the next event, 
then some state must have
been repeated (since all other heads keep reading the same symbols)
and the automaton, receiving the same information from
its heads while no event occurs, 
will continue to work in a cycle until a head causes an
event. Therefore it suffices to simulate the machine on a segment of
$2r-1$ symbols under and around the head under consideration
(or less symbols if the segment exceeds
the boundaries of the input tape). 
We assume that no other head causes the  next event, therefore
at most $2r^2-r$
different partial configurations consisting of state,
symbols read by the heads and the position on the segment are possible 
before one of the following happens:
\begin{itemize}
\item The head leaves the current block causing an event.
\item The head leaves the segment 
 around the initial position of the head.
\item The automaton gets into a loop repeating partial configurations within the segment.
\end{itemize}
In the the two former cases the boundary at which the next event 
is possibly caused by the head under cosideration is clear. In the latter case the head cannot
cause the next event. 
\qed\end{proof}

\begin{theorem}\label{strict-equiv}
  Deterministic multi-head automata
  with $k$ two-way heads and deterministic bounded counter machines with
  $k-1$ counters are equivalent over strictly bounded languages.
\end{theorem}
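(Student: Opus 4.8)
The plan is to prove the two simulations separately and to rely on Lemma~\ref{segment} throughout in order to keep all head movements \emph{local}, so that the position of a head matters only through the block it currently scans and through the boundary crossings it causes.

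For the easier inclusion, a deterministic bounded counter machine with $k-1$ counters is simulated by a deterministic automaton with $k$ heads. One head, say $G_k$, plays the role of the single input head and scans the input exactly as the counter machine does. Each counter $c_i$ ($1\le i\le k-1$) is represented by the distance of head $G_i$ from the left end-marker: incrementing $c_i$ moves $G_i$ one square to the right, decrementing moves it one square to the left, and a zero-test succeeds exactly when $G_i$ reads the left end-marker. Since a head can occupy any of the $n+2$ positions of an input of length $n$, the value stored by $G_i$ ranges over $0,\dots,n$, which is precisely the range of a counter bounded by the input length. No head ever needs to sense another, because the only comparison performed is against the fixed left end-marker. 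This produces an automaton with $(k-1)+1=k$ heads.

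For the converse we simulate a deterministic $k$-head automaton by a deterministic bounded counter machine with $k-1$ counters. We let one head, $H_1$, be carried by the genuine input head, while the heads $H_2,\dots,H_k$ are represented by the $k-1$ counters together with finite-control information: for each such head $H_j$ the control stores the index $b_j$ of the block currently under $H_j$, so that the symbol read by $H_j$ is always available, and the counter only encodes the position of $H_j$ inside its block. Because all symbols within a block are equal, the only positional data that can influence the computation are the boundary crossings, i.e., the events. We therefore drive the simulation event by event: exploiting the cyclic behaviour established in the proof of Lemma~\ref{segment}, the control decides for each head whether it can cause the next event and at which boundary by simulating the automaton on the fixed-length segment around the head, a segment that is fully determined by the stored block indices once we know whether the head lies within a fixed distance of a boundary.

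The main technical work, and the step I expect to be the principal obstacle, is maintaining the counters so that crossings are detected purely by zero-tests and so that the correct head is selected when several heads approach boundaries at once. We let each counter hold the distance from its head to the boundary it is approaching, together with a one-bit flag recording which boundary this is; moving toward that boundary decrements the counter, moving away increments it, and the counter reaching zero signals a crossing, after which the head enters the neighbouring block at distance zero from the boundary just passed and the flag is reset. The delicate points are that when a head reverses direction in the interior of a block the counter must be switched to the distance to the opposite boundary---the block length minus the current value, which is not directly available---and that ordering the times-to-event of two heads amounts to comparing two counter values scaled by their (constant) per-cycle displacements. Both are to be handled using the genuine input head as an auxiliary store of size $n$: by transferring a counter value onto the displacement of this head and scanning to the relevant block boundary one can re-measure the complementary distance and compare counters in lockstep. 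The crux is to carry out each such excursion \emph{reversibly}, so that $H_1$ is returned to its own position and all counters are restored; here Lemma~\ref{segment}, by bounding the per-cycle displacements and local excursions by constants depending only on the number of states, guarantees that the bookkeeping needed for this restoration is finite and that every auxiliary operation terminates correctly, leaving the simulation with exactly $k-1$ counters and completing the equivalence.
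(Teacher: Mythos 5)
Your easy direction (heads simulate counters as distances to the left end-marker) and your general framework for the converse (counters hold distances from heads to block boundaries, events as the only moments where positional information matters, Lemma~\ref{segment} to decide locally which boundary a head is approaching) agree with the paper. The gap lies in the mechanism you propose for the re-measuring steps, which you yourself identify as ``the crux'' but do not actually resolve.

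You fix $H_1$ to the genuine input head once and for all, and whenever a counter's semantics must be switched to the distance to the opposite boundary you send the input head on an excursion and promise to bring it back ``reversibly'', citing Lemma~\ref{segment}. But that lemma only bounds the length $2r-1$ of the segment that has to be inspected; it says nothing about the distance the input head must travel to reach another head's block and to re-measure a complementary distance across an entire block, which is unbounded. At the moment such an excursion is needed, $H_1$ may sit in the interior of a block, so its position is an unbounded quantity that must be parked somewhere before the head can roam --- and there is nowhere to park it: all $k-1$ counters are occupied, and the one counter that is momentarily free (that of the head which just caused the event, having reached zero) must itself end up holding a freshly measured unbounded distance. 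With only these resources you cannot simultaneously restore the pointer to an arbitrary interior position and fill that counter with a new measurement; reversibility does not help, because running the excursion backwards requires knowing where to stop. The paper escapes this by \emph{never} restoring the pointer to an interior position: it partitions the computation into intervals delimited by events, performs all roaming only while the pointer is assigned to a head that is next to a block boundary (a finitely describable position), and when a counter-represented head causes an event it \emph{interchanges the roles} of that counter and the pointer, dumping the old pointer position into the freed counter. This dynamic reassignment of the pointer, rather than a round trip of a permanently assigned $H_1$, is the missing idea, and without it your simulation does not go through.
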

\begin{proof}
 Any bounded counter machine with
$k-1$ counters can easily be simulated by a $k$-head automaton, independently
of the structure of the input string. The multi-head automaton
simulates the input head of the counter machine with one of its
heads and encodes the values stored by the counters as the distances of
the remaining head positions from the left end-marker.

For the converse direction we will describe the simulation of a
deterministic multi-head automaton $M$ by a deterministic 
bounded counter machine
$C$ and call $C$'s single input head its pointer, thus avoiding some
ambiguities. 
The counters and the pointer of $C$ are assigned to the heads of $M$,
this assignment varies during the simulation.

The counters will store distances between head positions
and boundaries of blocks of the input, where distances to 
left or right boundaries may occur in the course of the 
simulation. The 
finite amount of information consisting of the assignment and 
the type of distance for each counter is stored in $C$'s finite 
control. Depending on the type of distance stored for a head,
movements of $M$'s heads are translated into the corresponding
increment and decrement operations. If a distance to a left 
boundary is stored, a left movement causes a decrement and
a right movement an increment operation on the counter. 
For distances to a 
right boundary the operations are reversed. 

We divide the computation of the multi-head automaton $M$ into
intervals.  Each interval starts with a configuration in which at
least one head is next to a boundary (i.e., on one of the two positions
left or right of the boundary between blocks), 
one of these heads being represented by
$C$'s pointer. Notice that the initial configuration satisfies this
requirement.  Each interval except the last one ends when the next
event occurs. After this event the machine is again in a
configuration suitable for a new interval.

Counter machine $C$ always
updates the symbols read by the heads of $M$ (initially the first
input symbol or the right end-marker) and keeps this information in its 
finite control. A counter assigned to a
head encodes the number of symbols within the block 
that are to the left resp.\ right of the head position. The
counter machine also maintains the information which of these two
numbers is stored.  

We start our description of the algorithm that $C$ executes in a
configuration with the property that at least one head of $M$ is next
to a boundary. One of these heads is represented by the single
pointer of $C$.  First $C$ moves its pointer to every block that is
read by some head of $M$. It can determine these blocks uniquely from
the symbols stored in the finite control. It moves its pointer next to
the boundary that is indicated by the type of distance stored on the
counter assigned to the head under consideration. While the counter is
not zero it decrements the counter and moves the pointer towards the
head position. Then it determines whether this head could cause the
next event, provided that no other head does, by applying
Lemma~\ref{segment}. For this purpose the pointer reads the
surrounding segment of length $2r-1$ without losing the head position.  
Then the test is carried out in $C$'s finite control.

Suppose the head could cause the next event at some boundary. Then $C$
updates the contents of the counter to reflect the distance of
$M$'s head from that boundary. If the block has the form $a_i^{x_i}$ and
the head is on position $n\ge 1$ in this block, then the counter machine
moves its pointer towards the boundary  where the event possibly occurs and
measures the distance with the help of the counter. Thus $C$ updates
the contents of the counter with either $n-1$ or $x_i-n$, respectively, depending
on whether the event can occur at the left or right boundary.
If no event can be caused by the head currently considered, one of the distances is stored,
say to the left boundary.

These operations are carried out for every head. Finally $C$ moves its
pointer back to the initial position, which is possible since it is next
to a boundary. Then it starts to simulate $M$ step by step, translating 
head movements into counter operations according to the distance 
represented by the counter contents. If $M$ gets into
an accepting state, $C$ accepts. If the pointer leaves its block 
the next interval starts. If a counter
is about to be decremented from zero this operation is not carried out. 
Instead the current pointer position is recorded in this counter and 
the roles of pointer and counter are interchanged.  
The symbols read by the heads that 
pointer and counter are now assigned to, as well as the
internal state of $M$ are updated. This information can clearly be
kept in $C$'s finite control. A new interval starts.

The initial configuration of $C$ has all counters set to zero with the pointer
and all simulated heads reading the first input symbol. The assignment of heads to
counters is arbitrary, all counters store the distance to the left  boundary.
\qed\end{proof}

The equivalence of heads and counters implies that the intermediate
concept of simple heads --- two-way heads that cannot distinguish
different input symbols --- also coincides in power with counters over
strictly bounded languages. It is open whether the analogous
equivalence holds over arbitrary input or,
as conjectured in \cite{Morita77jap,Morita77}, 
simple heads are more powerful than counters.%
 \footnote{The $n$-bounded
  counters of \cite{Morita77jap,Sugata77jap} can count from 0 up to
  $n$ and can be tested for these values. They are easily seen to be
  equivalent to simple heads. The class of deterministic two-way machines
  equipped with $k$ such counters is denoted $C(k)$ in
  \cite{Morita77jap,Sugata77jap}, while machines with $k$ unbounded counters 
  are denoted by $D(k)$.}  
Finding a candidate for the separation even of one simple head
from a counter
seems to be difficult. In recognizing the language $L_4 = \{ ww \mid
w\in\{ 0, 1\}^*\}$ the full power of a simple head was used in 
\cite{Sugata77jap}, 
but this is not necessary. A counter machine can
first check that the input contains an even number of symbols. Starting
with the first symbol it then stores the distance of the current symbol
to the left end-marker 
on the counter. Sweeping its head over the entire input it increments
the counter in every second step and thus computes the offset of the 
corresponding symbol, moves its head to this position and compares the
symbols. Then it reverses the computation to return to the initial position 
and moves its head to the next symbol. If all corresponding symbols are equal 
it accepts.

Finally we compare the power of register machines and multi-head automata over
a single letter alphabet. For the simulation we will identify lengths of input
strings and input numbers.
In Lemma~5 of \cite{Monien80} a simulation of $k$-head automata over
a single letter alphabet by $k+1$ register machines is given. The simulation
is rather specialized, since it applies only to subsets of words
that have a length which is a power of two. We will generalize this simulation
to arbitrary languages over a single letter alphabet.

\begin{theorem}
Every deterministic (nondeterministic) $k$-head automaton over
a single letter alphabet can be simulated by a deterministic 
(nondeterministic) $k+1$ register machine. The heads are even allowed
to see each others.
\end{theorem}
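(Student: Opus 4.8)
The plan is to represent the $k$ head positions together with the input length by a list of $k+1$ \emph{gaps} whose sum is invariant, so that no separate register is needed to remember $n$. Number the tape positions $0,1,\dots,n+1$, with $0$ and $n+1$ carrying the end-markers, and let $0\le q_1\le q_2\le\cdots\le q_k\le n+1$ be the current head positions listed in nondecreasing order. The register machine will hold in its $k+1$ registers the gaps
\[
g_0=q_1,\qquad g_i=q_{i+1}-q_i\ (1\le i\le k-1),\qquad g_k=(n+1)-q_k,
\]
while its finite control stores the internal state of $M$ together with the weak ordering of the $k$ heads (which head occupies which rank, including the pattern of ties). Since $g_0+g_1+\cdots+g_k=n+1$ for every configuration, the length $n$ is encoded implicitly and $k+1$ registers suffice; the initial configuration with all heads on the first symbol corresponds to $g_0=1$, $g_1=\cdots=g_{k-1}=0$, $g_k=n$, which is set up from the start configuration --- register $1$ holding $n$, all others $0$ --- with a single increment.

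First I would show that the symbols scanned by $M$ and all sensing information are recoverable by zero-tests alone. A head sits on the left end-marker exactly when it has rank $1$ and $g_0=0$, on the right end-marker when it has rank $k$ and $g_k=0$, and reads the single tape letter otherwise; since the finite control knows the ranking, these cases are decided by testing $g_0$ and $g_k$. Two heads coincide precisely when every gap strictly between their ranks is zero, so the entire coincidence pattern --- hence the behaviour of sensing heads --- is read off by zero-testing the interior registers. Thus the register machine can reconstruct in finite control the complete input to $M$'s transition function.

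Next I would implement a single move of a head. To move head $i$ to the right, the finite control first reorders the heads tied with head $i$ so that head $i$ becomes the rightmost of its tied group; this changes only the ranking, not any gap, because all gaps inside a tied group are zero. If head $i$ is now at rank $j$, the register machine increments $g_{j-1}$ and decrements $g_j$, which leaves the sum invariant and keeps all registers nonnegative (the register $g_j$ to be decremented is positive unless head $i$ stands on the right end-marker, a situation in which $M$ performs no rightward move). Leftward moves are handled symmetrically, and the case of overtaking another head is subsumed by the preliminary reordering. Acceptance is inherited by having the register machine accept whenever $M$ enters a final state, and a nondeterministic $M$ is simulated by letting the register machine branch in exactly the same way, so the deterministic and nondeterministic claims follow together; a step that moves several heads at once is realized by performing the above updates one head after another.

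The main obstacle is the bookkeeping around ties and overtaking: I must verify that reordering a tied group is always possible in finite control without disturbing the gap invariant, that the unique positive gap needed for a move is correctly identified in every coincidence pattern, and that end-marker positions never force a decrement of a zero register. Once these finitely many cases are checked, the construction strictly generalizes Lemma~5 of \cite{Monien80}, removing its restriction to inputs whose length is a power of two, since nowhere is any special structure of $n$ exploited.
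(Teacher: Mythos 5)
Your construction is essentially the paper's own proof: both represent the configuration by the $k+1$ gaps between consecutive heads (taken in left-to-right order) and the two end-markers, keep the head ordering and tie pattern in the finite control (the paper normalizes by swapping the roles of heads about to be transposed, which is your tied-group reordering), update the two adjacent gaps on each move, and recover scanned symbols and sensing information by zero-tests. The one technicality the paper addresses that you gloss over is boundedness: your gaps sum to $n+1$, so a register can reach $n+1$ and exceed the input length when all heads coincide on an end-marker, which the paper avoids by offsetting the leftmost gap by one and recording in the finite control whether the leftmost head scans the left end-marker.
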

\begin{proof}
 First we normalize multi-head automata that can detect heads
scanning the same square such that the heads appear in a fixed
left-to-right sequence on the tape (if some heads are on the same
square we allow any sequence, which includes this fixed one). 
This is easily achieved because the
automata can internally switch the roles of two heads which are about
to be transposed.

The register machine simulating a $k$-head automaton with the help of $k+1$
registers stores in its registers the distances between neighboring heads
or the end-marker, where the distance is the number of steps 
to the right a head would have to carry out in order to 
reach the next head or end-marker. Register~1 represents the distance
of the last head to the right end-marker.
Whenever a head moves the register machine updates the two related distances.
A small technical problem is the distance to the left end-marker, which 
formally should  be $-1$ in the initial configuration. 
The register machine stores the information whether
the left-most heads scan the left end-marker in its finite control. 
In this way 
all distances can be bounded by the input length.\qed\end{proof}

\section{Time-Bounds for Counter Machines}\label{sechierarchy}

The purpose of this section is to establish lower and upper time-bounds on counter machines for concrete recognition problems.

\begin{theorem}\label{unboundedlower}
The recognition of the language $L'$ of marked palindromes with desert requires
$$\frac{n^2-16n\log_2n-dn}{8(\log_2n+2\log_2s+1)}$$ 
steps for input strings of length $n$ on $1$-counter machines and
$$\frac{n^2-16n\log_2n-dn}{8(2k\log_2n+\log_2s)}$$ 
steps on $k$-counter machines in the worst case and for
$n$ sufficiently large, where $s$ and $d$ are constants depending on the specific counter machine.
\end{theorem}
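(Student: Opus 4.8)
The plan is to use an incompressibility (Kolmogorov complexity) argument in the style of \cite{LV93} together with a crossing-sequence splicing lemma adapted to counter machines. First I would fix an incompressible word $x\in\{0,1\}^{\ell}$ with Kolmogorov complexity $K(x)\ge\ell$, where $\ell=(n-1)/4$ (such an $x$ exists for every length), and consider the accepted input $w=x0^{\ell}\$0^{\ell}x^{T}$ of length $n$. Running the given $k$-counter machine $M$ on $w$ takes $T$ steps. For each cell boundary $b$ inside the central block $0^{\ell}\$0^{\ell}$ --- there are about $n/2$ of them --- I record the \emph{crossing sequence} $c_b$: the list of complete configurations, each consisting of the control state together with all $k$ counter contents, at the instants the single input head steps across $b$. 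The desert is precisely what makes this central region long, supplying $\Theta(n)$ boundaries rather than the single boundary available for plain palindromes.

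The key step is a splicing property: for any boundary $b$ in the central desert the map $x\mapsto c_b$ is injective over length-$\ell$ inputs. Indeed, if two words $w_1=x_10^{\ell}\$0^{\ell}x_1^{T}$ and $w_2=x_20^{\ell}\$0^{\ell}x_2^{T}$ shared the crossing sequence at $b$, then the hybrid $x_10^{\ell}\$0^{\ell}x_2^{T}$ --- which agrees with $w_1$ to the left of $b$ and with $w_2$ to the right, the central symbols being identical --- would be accepted, because matching crossing configurations let the left computation of $w_1$ and the right computation of $w_2$ be composed into a single accepting computation. Yet the hybrid lies in $L'$ only when $x_1=x_2$, contradicting that $M$ recognizes $L'$. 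The delicate point, and the one I expect to be the main obstacle, is that this composition must remain legitimate even though the counters are \emph{global} rather than tied to a tape position; this is exactly why the recorded configuration must contain the full counter contents, and it is this that injects the $\log_2 n$ factor that is absent for ordinary Turing machines.

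Next I would average: since in one step the head crosses at most one boundary, the crossing sequences over the $\approx n/2$ desert boundaries have total length at most $T$, so some boundary $b^{*}$ has $|c_{b^{*}}|\le 2T/n$ crossings. By the lemma, $x$ is reconstructible from $M$, the number $n$, the position $b^{*}$, and $c_{b^{*}}$: enumerate all candidate left halves, simulate the boundary behaviour driven by $c_{b^{*}}$, and keep the unique half reproducing $c_{b^{*}}$. Encoding each of the $\le 2T/n$ configurations in $2k\log_2 n+\log_2 s$ bits (a self-delimiting code of length at most $2\log_2 n$ for each counter value bounded by $n$, and a fixed-length code for the state), plus $O(\log n)$ bits for $b^{*}$, $n$ and the sequence length, yields
$$\ell\le K(x)\le \frac{2T}{n}\,(2k\log_2 n+\log_2 s)+O(\log n).$$

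Solving for $T$ and substituting $\ell=(n-1)/4$ gives
$T\ge \bigl(n^{2}-16n\log_2 n-dn\bigr)/\bigl(8(2k\log_2 n+\log_2 s)\bigr)$,
where the $16n\log_2 n$ and $dn$ terms absorb the additive $O(\log n)$ and the approximation of $\ell$ by $n/4$, the bound holding for all sufficiently large $n$. For a single counter I would replace the generic self-delimiting code by a tighter encoding of the lone counter value (fixed length $\log_2 n$ rather than $2\log_2 n$, at the cost of a $2\log_2 s+1$ overhead per crossing), which produces the sharper denominator $8(\log_2 n+2\log_2 s+1)$ claimed in the $1$-counter case.
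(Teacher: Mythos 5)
Your proposal follows essentially the same route as the paper's proof: fix an incompressible $x$ with $|x|=(n-1)/4$, record crossing sequences consisting of state and counter contents at the $\approx n/2$ positions of the central desert, reconstruct $x$ by enumerating candidates and simulating against a short crossing sequence, and convert the resulting lower bound on crossing-sequence length into a time bound (your averaging step is equivalent to the paper's choice of a minimum-length crossing sequence multiplied by the number of positions). The one point to tighten is your assertion that each counter value is ``bounded by $n$'': for unbounded counter machines this needs the paper's separate justifications --- for one counter the value stays below $s(n+2)$, since otherwise a state/head-position pair would repeat with increasing counter and the machine would loop, and for $k\ge 2$ counters one may assume $T\le n^2$ (else the claimed bound is trivial), whence counters are at most $n^2$ and still fit in $2\log_2 n$ bits --- so your final arithmetic and both denominators are unaffected.
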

\begin{proof}
Let $M$ be a $1$-counter machine with $s\ge 2$ states accepting $L'$ and let $x$ be an incompressible string with $|x| = m \ge 1$. Consider the accepting
computation of $M$ on $x0^{|x|}\$ 0^{|x|}x^T$ 
and choose position $i$ adjacent to or within  the central portion $0^{|x|}\$ 0^{|x|}$ with a
crossing sequence $c$ having $\ell$ entries of minimum length. Notice that for $1$-counter machines the counter is bounded 
from above by $s(n+2)< 2sn$, since $n\ge 2m+1$.

String $x$ can be reconstructed from the following information:
\begin{itemize}
\item A description of $M$ ($O(1)$ bits).
\item A self-delimiting encoding of the length of $x$ ($2\log_2n$ bits).
\item Position $i$ of $c$ ($\log_2n$ bits).
\item Length $\ell$ of $c$ ($\log_2(4s^2n) $ bits).
\item Crossing sequence $c$ recording the counter contents and the state $M$ enters when crossing position $i$
($\ell(\log_2n+2\log_2s+1)$ bits). 
\item A formalized description of the reconstruction procedure outlined below ($O(1)$ bits).
\end{itemize}

For reconstructing $x$ from the above data, a simulator sets up a section of length $|x|$ followed by all symbols ($0$ or $\$ $)
up to position $i$. Then the simulator cycles through all strings $y$ of length $|x|$ and simulates $M$ step by step. Whenever position
$i$ is reached, it is checked that the current entry of the crossing sequence matches state and counter contents. If not, the current
$y$ is discarded and the next string is set up. If it matches, the simulation continues from state and counter contents of the next 
entry of the crossing sequence. If $M$ accepts, the encoded $x$ has been found and the simulation terminates.

Since $x$ is incompressible, for some constant $d$ compensating the $O(1)$ contributions we must have:
$$ |x| = (n-1)/4 \le \ell(\log_2n+2\log_2s+1) + 4\log_2n + d/4 - 1/4$$
and thus
$$ \ell \ge \frac{n-16\log_2n-d}{4(\log_2n+2\log_2s+1)}.$$

There are $(n-1)/2+2\ge n/2$ positions of crossing sequences with length at least $\ell$, thus we get
$$T(n) \ge \frac{n^2-16n\log_2n-dn}{8(\log_2n+2\log_2s+1)}.$$

For machines with $k\ge 2$ counters we bound the counter contents by the coarse upper bound $n^2$ 
(since the asymptotical bound grows more slowly, this bound suffices). This increases the bound on the length
of the encoding of crossing sequences to $\ell(2k\log_2n+\log_2s)$ bits. The time bound becomes:
$$T(n) \ge \frac{n^2-16n\log_2n-dn}{8(2k\log_2n+\log_2s)}.$$
\qed\end{proof}

Next we present an upper bound for the full language $L$ and give an algorithm that uses
only two counters in comparison to at least three in \cite{FMR68}. We conjecture that this cannot be reduced to one 
counter for subquadratic algorithms.
\begin{theorem}\label{upperone}
Language $L$ of marked palindromes can be accepted 
in $O(n^2/\log n)$ steps by a two-counter machine.
\end{theorem}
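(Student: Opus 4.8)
The plan is to speed up the naive quadratic algorithm --- which carries one symbol per traversal of the input --- by a factor of $\Theta(\log n)$, by carrying a whole block of $b=\lfloor\log_2 n\rfloor-1$ consecutive symbols across the centre in a single counter value. First I would have the machine make one preliminary sweep to verify that the input has the shape $w\$ v$ with exactly one $\$$ and $|w|=|v|$; this is a single $O(n)$ test using one counter to compare the two half-lengths. After this the $\$$ sits at the centre and serves as a fixed landmark for all subsequent navigation.

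The core loop processes the left half in blocks of length $b$, from left to right. For the current block I read its $b$ symbols with the head while accumulating their value by Horner's rule, \emph{starting the accumulator at $1$ rather than $0$}; after the block the accumulator thus holds $v=2^{b}+\sum_{i=0}^{b-1}c_i2^{i}$, a number below $2^{b+1}\le n$, whose leading $1$ acts as a sentinel encoding the block length. The only non-trivial counter operation here is doubling, which I implement by pouring one counter into the other; crucially, doubling a value $t$ costs $O(t)$ steps, so the whole Horner evaluation costs $\sum_i O(2^{i})=O(2^{b})=O(n)$, not $O(bn)$. With the value safely held in one counter, I navigate to the mirror block using only the other counter: move right to the $\$$ while counting the distance, then move the same distance further right. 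The symmetry of the centre guarantees this lands on the symbol mirroring the block's right end, and the value counter is never touched. I then decode: repeatedly compare the parity of the value with the symbol under the head, halve the value (again using the free counter as scratch), and step the head rightward, stopping exactly when the value reaches the sentinel $1$. The order of Horner's rule makes the least-significant decoded bit equal the first mirror symbol read, and halving again costs only $O(2^{b})=O(n)$. Finally I return to the boundary following the current block by the same mirror-the-distance technique and continue with the next block; the last, possibly short, block is compared symbol by symbol at a negligible $O(n\log n)$ cost.

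For the time bound there are $\Theta(n/\log n)$ blocks and each incurs $O(n)$ steps (one round trip plus the geometric-cost build and decode), giving $O(n^{2}/\log n)$ overall. The main obstacle is the two-counter budget: building the value, decoding it, and navigating each want scratch space, yet at every moment only one counter is free. The argument must therefore schedule these phases so that exactly one counter ever holds live data while the other is available as scratch, and it relies on two specific devices to make this possible --- the central $\$$ as a landmark, which lets a single counter both reach and return from the mirror block without disturbing the value, and the sentinel bit, which lets decoding detect the block boundary without a separate length counter. The accompanying observation that repeated doubling and halving cost only $O(2^{b})$ by the geometric series is what keeps the per-block cost at $O(n)$ and hence yields the claimed bound.
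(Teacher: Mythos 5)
Your scheme is essentially the paper's: a sentinel-led Horner encoding of a $\Theta(\log n)$-length block into a counter value below $n$, doubling/halving at geometric cost $O(2^b)=O(n)$ per block, navigation to the mirror block by measuring the distance to the central $\$ $ with the one free counter, and bit-by-bit decoding that exploits the reversal of order. Your observations about the sentinel, the LSB-first match with the mirror block, and the amortized cost of the arithmetic all agree with the paper's analysis.

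However, there is one genuine gap, and it is precisely the crux of getting from three counters down to two. You fix $b=\lfloor\log_2 n\rfloor-1$ and then read ``$b$ symbols'', but you never say how the machine knows when it has read $b$ of them. The quantity $b$ depends on $n$, so it cannot live in the finite control, and both counters are already committed during encoding (one holds the value, the other is the scratch for doubling); this is exactly the role of the third counter (``one for storing $\log_2 m$'') in the Fischer--Meyer--Rosenberg algorithm that the theorem is trying to eliminate. Your sentinel bit solves the boundary-detection problem for the \emph{decoder} (stop when the value reaches $1$) but not for the \emph{encoder}. The paper closes this hole by determining the segment length dynamically: after each symbol is folded into the value, the machine makes excursions of its input head toward the left and right ends of the tape, counting up on the empty counter and down on the value counter, thereby testing whether the value exceeds $n/2$; the segment ends at the first symbol for which it does. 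Crucially the excursion is aborted as soon as the value counter empties, so after the $j$-th symbol it costs only $O(\min(2^{j+1},n))$ steps, and the excursions sum to $O(n)$ per segment by the same geometric argument you use for the doublings; the value is then restored from the scratch counter. You would need this (or an equivalent two-counter test against the input length) to make your block decomposition executable; with it, your time analysis goes through unchanged. A minor further imprecision: ``pouring one counter into the other'' transfers a value rather than doubling it --- you need two increments per decrement --- but the $O(t)$ cost you claim is right for the corrected operation.
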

\begin{proof} We describe informally the work of a machine $M$ accepting $L$ on an input of length $n\ge 1$. The idea is to encode segments
of length $\log_2 n$ and iteratively compare segment by segment.

First $M$ scans the input and counts the symbols before the $\$ $ (if no  $\$ $ 
is found, $M$ rejects). After the $\$ $ the counter is decremented and the input is rejected, 
if zero is not reached on the right end-marker or a second $\$ $ is encountered. The first scan takes 
$n$ steps if $M$ starts on the leftmost input-symbol as defined in \cite{FMR68}.

First $M$ puts 1 on counter 1, repeatedly reads a symbol, doubles the 
counter contents (exchanging roles for each bit read), and adds 1 if the symbol read was 1. Notice that after such a doubling one of the counter contents is zero. 
Then $M$ makes excursions to the left and to the right counting up on the empty counter and down on the counter holding the encoding until the latter counter
becomes empty or $\$ $ resp.\ an end-marker is reached. The net effect is an (attempted) subtraction of $n/2$ from the encoding.
If the counter becomes zero, the initial encoding was less than $n/2$. 
After each of the excursions, the other counter
is used for returning to the initial position and the encoding is restored. 
If the encoding exceeds $n/2$, the process stops and the segment is compared to the corresponding portion to the right
of $\$ $.
Using the empty counter, $M$ moves to the corresponding portion
and in a symmetrical way as for the encoding decodes the segment. Since the encoding has a 1 as its most significant bit, no excursions are necessary.
In order to return to the last position in the segment, $M$ repeats the encoding process.
Then it continues with the next segment. The iterations stop if the marker $\$ $ is reached.

For the time analysis we omit constant and linear contributions to the 
number of steps, these are accounted for by an appropriately chosen constant factor of the leading term. 
The initial scan is clearly linear. By the doubling procedure the amortized cost of encoding and decoding is linear per segment. Notice that
the excusions are aborted if the counter holding the encoding is empty and thus also the excursions have linear complexity per segment.
Since the number of segments is $O(n/\log n)$, the bound follows.
\qed\end{proof}

\begin{theorem}
For  $k$-counter machines with fixed $k\ge 2$ accepting $L$  in $O(n^2/\log n)$ steps there is no linear speed-up.
\end{theorem}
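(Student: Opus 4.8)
The plan is to pin the leading coefficient of the worst-case $k$-counter time complexity of $L$ between two explicit positive constants and to observe that the lower one is independent of the number of states; linear speed-up would push this coefficient below any threshold, which is impossible. Concretely, I would combine the upper bound of Theorem~\ref{upperone} with a lower bound inherited from Theorem~\ref{unboundedlower}.

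First I would transfer the lower bound of Theorem~\ref{unboundedlower} from $L'$ to $L$. The key observation is that $L'\subseteq L$: writing $w=x0^{|x|}$ we have $w^T=0^{|x|}x^T$, so $x0^{|x|}\$0^{|x|}x^T=w\$w^T\in L$. Hence every $k$-counter machine $M$ accepting $L$ accepts exactly the $L'$-inputs used in the proof of Theorem~\ref{unboundedlower} and rejects the ``pasted'' strings $y0^{|x|}\$0^{|x|}x^T$ with $y\neq x$, since these are not palindromes and therefore lie outside $L$. The incompressibility reconstruction of that proof uses only acceptance of the genuine input and rejection of the pasted non-palindromes, so it applies verbatim to $M$. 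Consequently, for infinitely many $n$ (those of the form $4m+1$ with an incompressible $x$ of length $m$),
$$T_M(n)\ \ge\ \frac{n^2-16n\log_2 n-dn}{8(2k\log_2 n+\log_2 s)}\ =\ \frac{n^2}{16k\log_2 n}\,(1-o(1)).$$
The decisive feature is that the leading coefficient $\frac{1}{16k}$ of $n^2/\log_2 n$ depends only on $k$ and not on the number of states $s$, which enters merely through the lower-order summand $\log_2 s$ in the denominator.

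Next I would invoke the upper bound. Theorem~\ref{upperone} supplies a two-counter, hence $k$-counter, machine accepting $L$ in at most $C_0\,n^2/\log_2 n$ steps for some fixed constant $C_0$ and all large $n$. Thus the leading coefficient of the time complexity of $L$ on $k$-counter machines is confined to $[\frac{1}{16k},\,C_0]$. Now suppose linear speed-up held: then for every constant $c>1$ there would be a $k$-counter machine accepting $L$ whose running time has leading coefficient $C_0/c$ (the speed-up divides the dominant $n^2/\log_2 n$ term by $c$ and absorbs lower-order contributions). Choosing $c>16kC_0$ would produce a machine with leading coefficient below $\frac{1}{16k}$, contradicting the state-independent lower bound of the previous paragraph. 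Hence no such speed-up exists.

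I expect the main obstacle to be the faithful transfer of the lower bound from $L'$ to $L$: one must check that the pasting and reconstruction steps of Theorem~\ref{unboundedlower} remain valid when $M$ is only known to separate $L$ from its complement rather than to recognize $L'$ exactly, and that the chosen crossing positions (adjacent to or within the central block $0^{|x|}\$0^{|x|}$) still number $\Omega(n)$ and carry crossing sequences whose entries cost $2k\log_2 n+O(1)$ bits each. A secondary point requiring care is making the notion of linear speed-up precise---staying within $k$ counters and reading speed-up as division of the leading $n^2/\log_2 n$ term---so that the state-independence of the lower bound is exactly what rules it out; this also explains why, by contrast, increasing $k$ lowers the coefficient and can thereby permit speed-up, consistent with the final result announced in the introduction.
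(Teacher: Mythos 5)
Your proposal is correct and takes essentially the same route as the paper: the paper's own (two-sentence) proof likewise transfers the absolute lower bound of Theorem~\ref{unboundedlower} from the subset $L'$ to the full language $L$ and plays its state-independent leading constant against the upper bound of Theorem~\ref{upperone}. Your write-up simply supplies the details (the $L'\subseteq L$ inclusion, rejection of the pasted non-palindromes, and the explicit coefficient comparison) that the paper leaves implicit.
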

\begin{proof} The lower bounds for a subset of $L$ in Theorem~\ref{unboundedlower} carry over to the full language. These absolute 
lower bounds show that the algorithm from Theorem~\ref{upperone} cannot be accelerated by an arbitrary factor.
\qed\end{proof}

\begin{remark}
By adding counters and encoding larger segments of the input a speed-up for the recognition of $L$ is possible.
\end{remark}

We now adapt the witness language to the bounds considered in \cite[p.~273]{Greibach76}. There speed-up 
results for deterministic
and nondeterminstic two-way machines with $r$ counters and time bounds of the form $pn^k$ with $p > 1$ and $k\ge 1$ are stated. 
No formal proofs are given, but the preceding section contains a reference to \cite{FMR68}. We note here
that the speed-up results in \cite[Section~5.2]{FMR68} are based on Theorem~1.2, which appears before the definition of
two-way machines and therefore applies to one-way models only.

In the special case of  linear time bounds we will disprove the claimed speed-up. For at least quadratic bounds the 
technique does not apply, since the type of languages considered can be accepted in quadratic time comparing bit by bit and 
constant speed-up by forming blocks of constant size can clearly be achieved. Whether a general 
linear speed-up is possible is open, since there seems to be no efficient way to compress the contents of the input tape.

\begin{theorem}\label{loglower}
The recognition of the language $L_m$ requires
$(m/13)n$ steps on 4-counter machines 
in the worst case and for $n$ sufficiently large.
\end{theorem}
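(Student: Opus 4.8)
The plan is to adapt the crossing-sequence and incompressibility technique of Theorem~\ref{unboundedlower} to the logarithmic-information language $L_m$, replacing the single long crossing sequence by a \emph{linear number} of moderately long ones. First I would fix, for each admissible length, the parameter $k=|x|$ so that $n=2\cdot 2^{k/m}+1$; then $k=m(\log_2(n-1)-1)$ is of order $m\log_2 n$, and the two deserts together with the marker occupy all but $2k=O(\log n)$ of the $n$ positions. For $x\in\{0,1\}^k$ I write $w(x)=x0^{2^{k/m}-k}\$0^{2^{k/m}-k}x^T\in L_m$, and for a boundary $i$ between tape positions I let $c_i(w(x))$ denote the crossing sequence there, each of whose entries records the control state together with the contents of the four counters.

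The key step is a cut-and-paste lemma: for every boundary $i$ lying strictly between the $x$-block and the $x^T$-block (there are $n-2k$ of them, all inside a desert or at $\$$), the map $x\mapsto c_i(w(x))$ is injective. Indeed, if $x_1\ne x_2$ had equal crossing sequences at such an $i$, then gluing the prefix of $w(x_1)$ up to $i$ to the suffix of $w(x_2)$ from $i$ yields $x_10^{2^{k/m}-k}\$0^{2^{k/m}-k}x_2^T$, which a deterministic two-way machine must accept exactly as it accepts the two originals, although this string is not in $L_m$. The delicate point here, and what I expect to be the main obstacle, is making this swap rigorous when acceptance may occur with the head at an arbitrary position: one has to observe that, since a deterministic computation stops on entering a final state, the common crossing sequence already fixes on which side of $i$ the head halts, so both original inputs accept on the same side and the hybrid inherits that accepting configuration.

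Given injectivity I would count. Each crossing entry needs at most $b=8\log_2 n+\log_2 s$ bits once the four counter contents are bounded by the coarse polynomial bound $n^2$ used in Theorem~\ref{unboundedlower}; distinctness of the $2^k$ sequences at a fixed eligible boundary then forces their average length over $x$ to be at least $k/b-O(1)$, since fewer than a $2^{-b}$-fraction of them can be shorter than $k/b-O(1)$. Summing this average over the $n-2k$ eligible boundaries gives $2^{-k}\sum_x T(x)\ge (n-2k)\bigl(k/b-O(1)\bigr)$, so some input of length $n$ forces at least this many steps. With $k/b\to m/8$ and $n-2k=n(1-o(1))$ this worst-case quantity exceeds $(m/13)n$ for all large $n$.

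Finally I would discharge the circularity between running time and counter magnitude by a case split at the threshold $n^2$: in $T$ steps a counter never exceeds $T$, so if the worst-case time is already at least $n^2$ the bound $(m/13)n$ holds trivially for large $n$, whereas if it is below $n^2$ then every counter stays below $n^2$, each crossing entry fits in $b=8\log_2 n+\log_2 s$ bits, and the counting above applies verbatim. The constant $13$ is merely a convenient value absorbing the $o(1)$ terms, the additive $\log_2 s$, and the $-O(1)$ losses; since $k/b\to m/8$ the estimate in fact yields any coefficient below $m/8$, and $m/13<m/8$.
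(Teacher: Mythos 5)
Your proposal is correct, and it reaches the stated bound by a genuinely different formalization of the same underlying idea. The paper also works with crossing sequences in the desert of $w(x)$, but it packages the information-theoretic step as a Kolmogorov-complexity argument: it fixes a single incompressible $x$, takes the desert position with the \emph{shortest} crossing sequence, and shows that $x$ can be reconstructed from a description of $M$, the position, and that one crossing sequence (via a simulator that cycles through candidate strings $y$ and checks consistency with the recorded sequence); incompressibility then forces $\ell\ge m/6-o(1)$, and multiplying by the $n/2$ desert positions gives $(m/12)n-o(n)$. You instead prove an explicit cut-and-paste injectivity lemma ($x\mapsto c_i(w(x))$ is injective at every desert boundary), and then count: distinctness of the $2^k$ sequences forces average length $k/b-O(1)$ at each of the $n-2k$ boundaries, and summing gives roughly $(m/8)n$. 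The two routes are logically close --- indeed the correctness of the paper's reconstruction procedure tacitly rests on exactly the fooling/cut-and-paste property you prove explicitly, and you are right that the acceptance-side issue (resolved by the parity of the common crossing sequence, since a deterministic machine halts on accepting) is the one delicate point there. What your version buys is elementarity (no incompressibility machinery), an explicit treatment of the circularity between running time and counter magnitude via the case split at $n^2$ (the paper instead simply assumes the linear bound $pn$ when estimating the counter contents, leaving the trivial case implicit), and in fact a slightly better constant, $m/8$ versus $m/12$, because you use all $n-O(\log n)$ desert boundaries rather than $n/2$ copies of the minimum-length sequence; what the paper's version buys is brevity, since it reuses verbatim the encoding framework already set up for Theorem~\ref{unboundedlower}. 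Both comfortably exceed the claimed $(m/13)n$.
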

\begin{proof} The proof is adapted from the one given for Theorem~\ref{unboundedlower} and we only describe the differences.

We assume $|x| > m\log_2|x|$ in the following. 
Since the part $x$ is now only a logarithmic portion of the input, $2\log_2\log_2n$ bits suffice for encoding the length. The time bound
is the linear function $pn$, therefore the crossing sequence can be described in $\ell(4(\log_2n+\log_2p)+\log_2s)$ bits. 

For an incompressible $x$ we obtain:
\begin{eqnarray*}
m\log_2n - 2m \le  |x|  & = & m\log_2((n-1)/2) \\
                                   & \le &\ell(4(\log_2n+\log_2p)+\log_2s) + 2\log_2n + d \\
                                   & = & \ell(6\log_2n) + d'
\end{eqnarray*}
with constants $d, d'$ and
$ \ell \ge m/6 - o(1).$
Since the desert is at least $n/2$ symbols long for $|x| > m\log_2|x|$ we get the time bound
$T(n)  \ge (m/12)n - o(n).$
\qed\end{proof}
 
We now have to show that there is a linear recognition algorithm for $L_m$.
\begin{theorem}\label{logupper}
The language $L_m$ can be accepted 
in $(2m+3)n + o(n) $ steps by a 4-counter machine.
\end{theorem}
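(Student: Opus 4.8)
The plan is to exploit the defining feature of $L_m$: the ``desert'' of zeros forces the informative part $x$ to have only logarithmic length, $|x| = m\log_2((n-1)/2)$, so $x$ splits into $m$ segments each of length $k = |x|/m \approx \log_2 n$. A block of $k$ binary symbols encodes a number below $2^k \le (n-1)/2$, which fits into a single bounded counter. The algorithm therefore rests on the idea already used in Theorem~\ref{upperone}, namely encoding a segment of logarithmic length in a counter by a doubling process; the point here is that, because $x$ is short, only $m$ segments occur, so each can be transported across the desert at the cost of one traversal of the tape rather than one per symbol.

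First I would run a setup phase of cost $3n + o(n)$. Scanning from the left I locate the unique $\$$, storing the number $2^k$ of preceding symbols in a counter, and verify during a second scan that exactly $2^k$ symbols follow, so that $\$$ sits in the exact centre and there is no second marker; the zeros of the two deserts are checked along the way. From $2^k$ I compute the segment length $k$ by repeatedly halving the counter, incrementing a separate counter once per halving and rejecting if an odd value other than $1$ appears. The key observation is that halving $2^k$, then $2^{k-1}$, and so on, costs $2^k + 2^{k-1} + \cdots + 1 < 2^{k+1} \le n$ steps in total, so the computation of $k$ is linear rather than $O(n\log n)$. Multiplying $k$ by the constant $m$ then yields $|x| = mk$ in $o(n)$ additional steps.

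The main loop runs once for each of the $m$ segments $s_1,\dots,s_m$, keeping the index $i \le m$ in the finite control. In iteration $i$ the machine walks right $(i-1)k$ squares to the start of $s_i$ (an $o(n)$ move), reads the $k$ bits of $s_i$ left to right while building their encoding in a counter by doubling, and then crosses the remainder of the tape to the right end-marker. Since input position $p$ must agree with position $n+1-p$, the mirror image of $s_i$ occupies the $k$ squares lying a logarithmic distance $(i-1)k$ from the right marker; the machine steps left that distance, reads those $k$ bits in the reverse scanning direction so as to recover $s_i$ in the same bit order, and encodes them in a second counter. Comparing the two encodings by simultaneous decrement rejects unless $s_i$ matches its mirror, i.e.\ unless the two halves are genuine reversals. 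Finally the pointer returns to the left marker. Each iteration performs two full traversals of the tape and only $o(n)$ further work, so the loop contributes $2mn + o(n)$ steps and the grand total is $(2m+3)n + o(n)$, matching the lower bound of Theorem~\ref{loglower} up to the constant factor.

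The hard part will be confining all of this to four counters. The equality test naturally wants two counters holding the two $O(n)$-sized segment encodings, which leaves little room for the bookkeeping quantities $k$, $|x|$ and the running offset $(i-1)k$. The resolution I would pursue is that every piece of bookkeeping data is either a constant kept in the finite control (the index $i \le m$) or an $o(n)$-sized quantity that can be regenerated on demand: $k$ is recoverable by re-deriving $2^k$ through a rescan to $\$$ and halving, and the offsets are obtained by adding the recovered $k$ to itself a constant number of times into a scratch counter. Reserving the two heavy counters for the encodings during each comparison, and recomputing the cheap control values between comparisons, should keep the active count at four throughout while adding only $o(n)$ to the running time per segment.
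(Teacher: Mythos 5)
Your overall architecture (setup scans, then segment-wise encoding of $x$ into counters and transport across the desert) matches the paper's, but there is a genuine quantitative gap that breaks the stated bound. You cut $x$ into $m$ segments of length $k=|x|/m=\log_2((n-1)/2)$, so each segment encodes to a counter value as large as $2^k=\Theta(n)$ --- you even call these ``$O(n)$-sized segment encodings''. But every arithmetic manipulation of such a value is itself $\Theta(n)$ steps: building the encoding by repeated doubling costs $\sum_{j\le k}\Theta(2^j)=\Theta(2^k)=\Theta(n)$, building the second encoding costs another $\Theta(n)$, and comparing the two by simultaneous decrement costs up to $(n-1)/2$ more. Hence each of your $m$ iterations costs $2n$ for the traversals \emph{plus} $\Theta(n)$ for the counter arithmetic, and the total is $(2+c)mn$ for some constant $c>0$ rather than $2mn+o(n)$. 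Since the whole point of Theorems~\ref{loglower} and~\ref{logupper} is to pin down the constant in front of $n$ (to refute linear speed-up), this loss is fatal. The paper avoids it by using $2m$ segments of $|x|/2m$ bits, so every encoded value is at most $2^{|x|/2m}=O(\sqrt{n})$ and all encoding/decoding work is $O(m\sqrt{n})=o(n)$ in total; the $2m$ one-way traversals then account for the entire $2mn$ term.

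Two secondary problems compound this. First, your plan for staying within four counters relies on regenerating $k$ ``through a rescan to $\$$'', which costs $\Theta(n)$ per use, not $o(n)$; done once per iteration this alone adds another $\Theta(mn)$. Second, holding one completed encoding while building a second by doubling already occupies three counters, plus a bit-counter for the segment length, plus whatever stores $k$ and the offset $(i-1)k$ --- the arithmetic does not obviously close at four. The paper sidesteps this by never encoding the mirror segment at all: it carries one encoding across the desert and \emph{decodes it against the input symbols}, so only two counters ever hold encoding data and the other two are free for length bookkeeping. You would need to adopt both of these devices (shorter segments and decode-against-input) to make the argument go through.
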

\begin{proof} In a first left-to-right scan, 
an acceptor $M_m$ for $L_m$ determines the length of the part of the input before
the $\$ $, compares it to the part after $\$ $ and at the same time counts that length again in another counter.
By repeatedly dividing by two, it computes $|x|/m$ in at most $n$ steps 
and checks in a right-to left scan, 
whether the middle portion of the input contains only 0's. While doing so, $M_m$ preserves
$|x|/m$ and computes  $2^{|x|/m}-|x|$ for checking the left half of the input.

Then $M_m$ starts to encode blocks of $|x|/2m$ bits on two counters while using the other two counters for
checking the length of the block. Since $|x| \le m\log_2n$, the encoding can be done in $O(\sqrt{n})$ steps. 
Then $M_m$ moves its input head to the other half of the input, 
decodes the stored information, and contiues with the next iteration. The number of $2m$ iterations of $n + o(n)$
steps each can be counted in the finite control. 
\qed\end{proof}

As an example take $L_{39}$, which can be accepted in $82n$ steps by Theorem~\ref{logupper} for sufficiently large $n$, but a linear speed-up to
$2n$ is impossible by Theorem~\ref{loglower}.

\section{Speed-Up for Counter Machines on Bounded Input}\label{secspeedup}
In Section~\ref{sechierarchy} we have shown that linear speed-up for certain recognition problems
on counter machines can only be achieved by adding counters. 
If the input is compressible, the situation is different and a fixed number of counters depending on
the structure of the language suffices for speeding up by any constant factor.

We will apply the classical result of Fine and Wilf on periodicity of sequences: 
\begin{theorem}[\cite{Fine65}]\label{finewilf}
Let $(f_n)_{n\ge 0}$ and $(g_n)_{n\ge 0}$ be two periodic sequences of
period $h$ and $k$, respectively. If $f_n = g_n$ for $h + k - \mbox{gcd}(h, k)$
consecutive integers, then $f_n = g_n$ for all $n$.
\end{theorem}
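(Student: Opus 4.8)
The plan is to reduce Theorem~\ref{finewilf} to the classical finite-word version and to establish the latter by an induction mirroring the Euclidean algorithm, after which the statement for infinite sequences follows easily. Set $d=\gcd(h,k)$ and, shifting indices, assume the agreement occurs on the window $W=\{0,1,\dots,L-1\}$ with $L=h+k-d$. First I would replace the two infinite sequences by a single word: define $w_j=f_j=g_j$ for $j\in W$. If $j,j+h\in W$ then $w_{j+h}=f_{j+h}=f_j=w_j$, so $w$ has period $h$ on $W$; symmetrically $w$ has period $k$. Everything thus reduces to the assertion that a word of length $h+k-d$ carrying the two periods $h$ and $k$ must already have period $d$.

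The core step is to prove this word statement, and I would do it by strong induction on $h+k$ (equivalently, along the Euclidean algorithm). If $h=k$ then $d=h$ and there is nothing to show. If $h>k$, I would pass to the prefix $u=w_0w_1\cdots w_{L-k-1}$ of length $L-k=h-d$. Using period $h$ and then period $k$ one obtains, for $0\le i\le L-1-h$, the chain $w_i=w_{i+h}=w_{i+h-k}$, which is exactly period $h-k$ on $u$; moreover $u$ inherits period $k$ as a prefix of a period-$k$ word. Since $(h-k)+k-\gcd(h-k,k)=h-\gcd(h,k)=h-d=L-k$, the prefix $u$ meets the hypothesis for the pair $(h-k,k)$, whose sum is smaller, so by induction $u$ has period $d$. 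The key inequality $h-k\ge d$ (because $h-k$ is a positive multiple of $d$) gives $L-k=h-d\ge k$, so $u$ contains a full length-$k$ block; as this block is the fundamental period of $w$ and is $d$-periodic with $d\mid k$, the whole word $w$ has period $d$.

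Finally I would lift back to the sequences. Because $L\ge\max(h,k)$, the length-$h$ prefix of $w$ is one full period of $f$ and is $d$-periodic, and since $d\mid h$ this forces $f$ to be globally $d$-periodic; the same argument applied to the length-$k$ prefix shows $g$ is globally $d$-periodic. As $f$ and $g$ agree on $\{0,\dots,d-1\}\subseteq W$ and both have period $d$, they coincide for every $n$. The main obstacle is the bookkeeping in the inductive step: one must choose the prefix length precisely so that the bound $h+k-d$ is preserved as an equality, and simultaneously verify that the shortened word is still long enough (the inequality $h-d\ge k$) to propagate the freshly obtained period $d$ back over the whole word. This tightness is exactly what makes $h+k-\gcd(h,k)$ the optimal threshold.
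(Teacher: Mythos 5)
The paper does not prove this statement: it is quoted verbatim as the classical periodicity theorem of Fine and Wilf, cited to their 1965 paper, and is then used as a black box in the proof of Theorem~\ref{speedupbounded}. So there is no in-paper argument to compare yours against. On its own merits your proof is correct and is essentially the standard subtractive-Euclidean induction: the reduction to a single word of length $h+k-\gcd(h,k)$ carrying both periods, the passage to the prefix of length $h-\gcd(h,k)$ with periods $h-k$ and $k$, the verification that this length again meets the Fine--Wilf threshold for the smaller pair (since $\gcd(h-k,k)=\gcd(h,k)$), and the propagation of the period $d$ back over all of $w$ via $h-d\ge k$ are all sound; the index ranges in the chain $w_i=w_{i+h}=w_{i+h-k}$ match exactly the range needed for period $h-k$ on the prefix, which is the tightness you correctly identify. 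Two cosmetic points: ``fundamental period'' for the length-$k$ prefix is a misnomer (you mean the block that generates $w$ under $k$-periodic extension), and after shifting the agreement window to start at $0$ you directly recover $f_n=g_n$ only from the original window onward; but since you establish that both sequences are globally $d$-periodic and they agree on $d$ consecutive positions, equality for every $n\ge 0$ follows regardless, so nothing is lost.
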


\begin{theorem}\label{speedupbounded}
 For every $k$-counter machine
 accepting a bounded language with $m$ blocks
 and operating in time $t(n)$ there is an equivalent counter machine with
 $k+m$ counters operating in time $n + c t(n)$ for any constant $c > 0$.
\end{theorem}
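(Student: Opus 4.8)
The plan is to construct from the given $k$-counter machine $M$ an equivalent machine $M'$ with $k+m$ counters that works in two stages: a single preprocessing scan costing $n+O(1)$ steps, followed by a compressed simulation costing only $c\,t(n)$ steps. Throughout, $M'$ fixes a large base $B$ depending only on $c$, on the number of states of $M$, and on the periods $|w_1|,\dots,|w_m|$; it suffices to take $B$ a multiple of $\mathrm{lcm}(|w_1|,\dots,|w_m|)$ with $B\ge a/c$ for a suitable constant $a$. During the scan $M'$ reads the input once and parses it as $w_1^{i_1}\cdots w_m^{i_m}$, rejecting if no such factorization exists (these inputs lie outside the language, so rejecting them is correct). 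Locating the block boundaries unambiguously is exactly where Theorem~\ref{finewilf} enters: when neighbouring words $w_j,w_{j+1}$ share a period, the point where $w_j^{i_j}$ ends is a priori ambiguous, but the Fine--Wilf theorem guarantees that any discrepancy between the two periods becomes visible within $|w_j|+|w_{j+1}|-\gcd(|w_j|,|w_{j+1}|)$ symbols, so a deterministic left-to-right parse fixes each boundary after bounded look-ahead. The $m$ additional counters record the block structure, namely the distances from the head to the boundaries, so that in the sequel the position of $M$'s input head is maintained \emph{logically} rather than by physically moving the input head of $M'$.

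The heart of the simulation is a base-$B$ encoding that turns long uniform stretches of $M$'s computation into single bulk updates. I would store every counter value $v$ of $M$, and likewise the offset of $M$'s head inside the current block, in the form $v=B\cdot V+d$, keeping the high part $V$ in a counter and the low digit $d\in\{0,\dots,B-1\}$ in the finite control; since $B$ is a multiple of every $|w_j|$, the symbol under $M$'s head is a function of $d$ and of the current block index alone, and a zero-test of $v$ reduces to ``$V=0$ and $d=0$''. Fix a block $j$ and a \emph{coarse context}, i.e.\ a choice of which high parts vanish (the remaining counters test nonzero) together with the exact values of the vanishing counters, which are bounded by $B$ and hence fit in the finite control. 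In this situation the tuple consisting of $M$'s state, all low digits, the offset modulo $|w_j|$, and the small counters evolves by a deterministic map on a finite set and is therefore eventually periodic; let $\Pi=O(1)$ be its period. Over one period each low digit and each small counter returns to its value while every high part $V_\ell$ drifts by a fixed integer $\nu_\ell$ with $|\nu_\ell|\le\Pi$. Consequently one \emph{super-period} of $B$ consecutive periods leaves all low digits unchanged and changes each $V_\ell$ by exactly $\nu_\ell$, so $M'$ simulates $B\Pi$ steps of $M$ by the $O(1)$ counter operations that add the constants $\nu_\ell$ to the high parts --- a speed-up by the factor~$B$.

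A super-period may be applied only while the coarse context is preserved: no high part may reach zero and the head may not cross a boundary. Since each quantity changes by at most one per step, over the $B\Pi$ steps of a super-period every high part varies by at most $\Pi+1$, so it is safe to super-period as long as all relevant high parts, and the coarse distance of the head from both boundaries, exceed a constant threshold $\theta=O(1)$. A \emph{coarse event} --- a high part dropping to $\theta$ or the head approaching a boundary --- occurs only when some coarse quantity $\lfloor v/B\rfloor$ changes, and each such quantity moves by at most one per $B$ steps while its total variation over the computation is $O(t(n))$; hence there are only $O(t(n)/B)$ coarse events. At an event the offending counter has value $O(B)=O(1)$ and can be read into the finite control, where the exact step of its zero-crossing (or of the boundary crossing) is computed; $M'$ then advances the remaining fewer than $\theta B$ periods as one \emph{partial} super-period, which again costs $O(1)$ since the induced changes are $O(B)$ and so produce a bounded number of carries. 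After the event $M'$ switches to the period of the new coarse context. In total the simulation performs $O(t(n)/(B\Pi))$ safe super-periods and $O(t(n)/B)$ event-handlers of cost $O(1)$ each, i.e.\ $O(t(n)/B)$ steps; choosing $B$ large enough bounds this by $c\,t(n)$, and with the scan the bound $n+c\,t(n)$ follows.

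The main obstacle is the bookkeeping that makes the event count genuinely affordable. Speed-up is lost unless \emph{both} the number of coarse events is $O(t(n)/B)$ \emph{and} each event is resolved in $O(1)$: were an event to cost $\Theta(B)$, say by simulating the transition region step by step, the $O(t(n)/B)$ events would already amount to $\Theta(t(n))$ and all acceleration would vanish. The delicate point is therefore to verify that once a high part falls below $\theta$ the whole transition is driven by data of size $O(B)=O(1)$, so that the exact zero-crossing, the ensuing change of coarse context, and the corresponding new period are all entries of a finite table indexed by coarse contexts and block indices and can be applied in constant time. A second, more routine, obstacle is to realise the logical head together with boundary detection within exactly $m$ additional counters; I would store the distances to the $m-1$ internal boundaries and to one end-marker, recognise the other end-marker with the parked input head, and update all these distances in bulk during each (partial) super-period just as the counters of $M$ are updated.
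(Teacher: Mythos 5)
Your overall architecture coincides with the paper's: a single $n$-step scan that uses Theorem~\ref{finewilf} to compress the bounded input onto $m$ auxiliary counters while tolerating the ambiguity of block boundaries, followed by a base-$B$ compression of all counters so that $B$ steps of $M$ cost $O(1)$ simulator steps. The difference is that the paper does not re-prove the second half: it turns the input head into one more counter and then invokes the one-way speed-up of Theorem~5.3 of \cite{FMR68} verbatim. Since you chose to re-derive that speed-up, your proposal stands or falls with the re-derivation, and as written it has two genuine problems.

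First, the super-period accounting is inconsistent. If the finite part (state, low digits $d_\ell$, small counters, offset modulo $|w_j|$) is periodic with period $\Pi$ and each $d_\ell$ returns to its value after one period, then the net change of $v_\ell$ over one period is a multiple of $B$ of absolute value at most $\Pi$, so $|\nu_\ell|\le\Pi/B$, not $\le\Pi$; and over $B$ consecutive periods $V_\ell$ changes by $B\nu_\ell$, not by $\nu_\ell$. With the bound $|\nu_\ell|\le\Pi$ that you state, the bulk update of one super-period could cost up to $kB\Pi$ counter operations for $B\Pi$ simulated steps, i.e.\ no speed-up at all; only with the corrected bound does the factor $B$ emerge. (The periodicity detour is also unnecessary: it suffices to tabulate, for each finite part within a coarse context, the effect of the next $B$ steps, after which each high part has moved by at most one.) Second, the claim that there are only $O(t(n)/B)$ coarse events is not established: a counter value can oscillate across a multiple of $B$, or a high part across your threshold $\theta$, every $O(1)$ steps, so ``$\lfloor v/B\rfloor$ moves by at most one per $B$ steps'' is false and the total number of events can a priori be $\Theta(t(n))$. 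One needs a hysteresis band --- e.g.\ a counter is absorbed into the finite control only when it falls below $B$ and released only when it exceeds $2B$ --- to force at least $B$ steps of $M$ between consecutive events charged to the same counter. Both defects are repairable (and are taken care of in the proof of \cite{FMR68} that the paper cites instead of reproving), but as written the $c\,t(n)$ term does not follow from your argument.
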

\begin{proof}
The strategy is to encode the input of a counter machine $M$ with
 $k$ counters on $m$ counters of a simulator $M'$ in a first stage and simulate the two-way
machine $M$ with speed-up using the method suitable for one-way machines 
from the proof of Theorem~5.3 in \cite{FMR68}. 

Let the accepted language be a subset of  $w_1^*w_2^*\cdots w_m^*$. Note that
 in general the exponents $k_1, \ldots, k_m$ of $w_1, \ldots, w_m$ for a given input
are not easily recognizable, since the borders between the factors might not be evident. 
We will show that an encoding of the input is nevertheless possible.

The encoding
of the input will work in 
stages. At the end of stage $i$ the encoding covers at least the first $i$ blocks. Let 
$\mu = \max \{ |w_j| \mid 1 \le j \le m\}$. At the start of stage $i$ simulator $M'$ 
reads 
$2\mu$ additional input symbols if possible (if the end of the input is reached, the
suffix is recorded in the finite control). Machine  $M'$  records this string $y$ in the finite control. 
Then for each conjugate 
$vu$ of $w_i = uv$ the input segment $y$ is compared to a prefix of length $2\mu$ of
$(vu)^{\omega}$. If no match is found, the segment $y$ and its position is stored in the finite control
and the next stage starts.
If a match to some $vu$ is found, $M'$ assigns a counter to this 
$vu$, stores the number of copies of $vu$ within $y$ (including a trailing prefix of $vu$ 
if necessary) are recorded in the finite control, and then $M'$ continues to count
the number of  copies of $vu$ found in the input until the end of the input is reached or
the next $|vu|$ symbols do not match $vu$. These $|vu|$ symbols are stored in the 
finite control. The process ends when the input is
exhausted. Then $M'$ has stored an encoding of the input, which can be recovered by 
concatenating the segments stored in the finite control and the copies of the conjugates of
the $w_i$.

We now argue that at the end of stage $i$ the encoding has reached the end of block $i$. Initially
the claim holds vacuously. Suppose by induction that the claim holds for stage $i-1$.  The next 
$2\mu$ symbols are beyond block $i-1$ and if they do not match a power of a conjugate, 
then the string is not embedded into a block. Thus it extends over the end of block $i$ and the 
claim holds. Otherwise a match between some $vu$ and $y$ is found. String $y$ is a 
factor of some $w_j^\omega$ and by the Fine and Wilf
Result (Theorem~\ref{finewilf}) both are powers of the same $z$ since 
$2\mu \ge h + k - \mbox{gcd}(h, k)$ with $h = |vu|$ and $k = |w_j|$. Therefore $M'$ is able to
encode all of the copies of $w_j$ on the counter. Notice that $vu$ is not necessarily a conjugate of
$w_j$. Thus the claim also holds in this case.

Since each stage requires at most one counter, the $m$ additional counters suffice.

After encoding the input, the two-way input-head of $M$ is simulated 
by $M'$ with the help of $m$ counters and its input-head, which is 
used as an initially empty counter measuring the 
distance to the right end-marker. Whenever the
simulated head enters a block $i$, the simulator starts to decrement the corresponding 
counter and increment a counter available (since it has just been zero). The input
head is simulated on $w_i$, where the position of the input head 
position modulo $|w_i|$ is kept in the finite control of $M'$. Also the assignment of
counters to blocks is dynamic and stored in the finite control.

Now $M'$ is replaced by $M''$ with compressed counter contents and operating with speed-up according to the
proof of Theorem~5.3 from \cite{FMR68}. \qed\end{proof}

\end{document}